\documentclass{ifacconf}
\usepackage[utf8x]{inputenc} 
\usepackage{graphicx}      
\usepackage{natbib}        
 \usepackage{amsfonts}
\usepackage{amsmath}
 \usepackage{amssymb}
 \usepackage{enumerate}
 \usepackage{flafter}
 \usepackage{fleqn}
 \usepackage{latexsym}
 \usepackage{mathrsfs}
 \usepackage{psfrag}
 \usepackage{subfigure}
 \usepackage{url}
  \usepackage{color}
  \usepackage{times}
 \usepackage{flushend}
 \usepackage{bm}
 \usepackage{multirow}
 \usepackage{multicol}
 \usepackage{arydshln} 
 \usepackage{soul}

\newtheorem{mydef}{Definition}

\newtheorem{mythm}{Theorem}


\begin{document}
\begin{frontmatter}

\title{Precise Motion Control of Wafer Stages via Adaptive Neural Network and Fractional-Order Super-Twisting Algorithm} 

\thanks[footnoteinfo]{This work is partially supported by CSC (China Scholarship Council).}

\author[First,Second]{Zhian Kuang} 
\author[Second]{Liting Sun}
\author[First]{Huijun Gao}
\author[Second]{Masayoshi Tomizuka}

\address[First]{Research Institute of Intelligent Control and Systems, Harbin Institute of Technology, Harbin 150001, P.R. China (e-mail: zhiankuang@foxmail.com)}
\address[Second]{Mechanical Control System Lab, Mechanical Engineering Department, University of California, Berkeley, CA 94720, USA (e-mail:tomizuka@berkeley.edu)}


\begin{abstract}                
To obtain precise motion control of wafer stages, an adaptive neural network and fractional-order super-twisting control strategy is proposed. Based on sliding mode control (SMC), the proposed controller aims to address two challenges in SMC: 1) reducing the chattering phenomenon, and 2) attenuating the influence of model uncertainties and disturbances. For the first challenge, a fractional-order terminal sliding mode surface and a super-twisting algorithm are integrated into the SMC design. To attenuate uncertainties and disturbances, an add-on control structure based on the radial basis function (RBF) neural network is introduced. Stability analysis of the closed-loop control system is provided. Finally, experiments on a wafer stage testbed system are conducted, which proves that the proposed controller can robustly improve the tracking performance in the presence of uncertainties and disturbances compared to conventional and previous controllers.
\end{abstract}

\begin{keyword}
Precision Control, Sliding-mode Control, Neural Networks, Fractional-order, Linear Motors, Stability Analysis, Uncertainty
\end{keyword}

\end{frontmatter}

\section{Introduction}
Photolithography is one of the most important processes for semiconductor manufacturing ~\citep{mishra2009projephotoction}. An exemplar photolithography system is shown in Fig.~\ref{fig:wafer_scanner}. A laser beam that goes through the integrated circuit patterns on the reticle is projected on the wafer so that the patterns are printed onto the wafer. During this process, the stage carrying the wafer (the wafer stage) needs to move steadily and precisely so that the patterns are printed accurately~\citep{oomen2013connecting}. 
\begin{figure}[http]
  \centering
   \includegraphics[width=200pt]{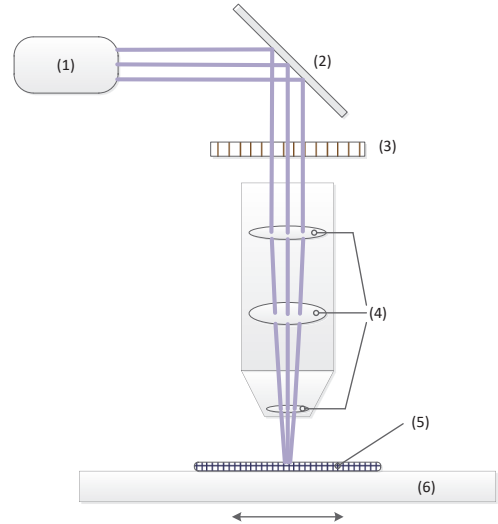}
   \caption{A schematic diagram of a photolithography system, where (1)~is a laser generator, (2)~a reflective mirror, (3)~a reticle, (4)~projection lenses, (5)~a wafer and (6)~a wafer stage.}
   \label{fig:wafer_scanner}
   \end{figure}
   
With the technological development of the semiconductor industry, manufacturers demand more precise performance from the wafer stage. To achieve the goal, researchers have developed many control strategies and applied them on the wafer stage, including iterative learning control (ILC)~\citep{heertjes2007robustness, mishra2009projection, sun2014selective, zheng2017design}, sliding mode control (SMC)~\citep{heertjes2014self, li2016state, wu2011sliding, ito2014sliding}, $H \infty$ feedback control~\citep{van2002multivariable}, multi-rate control~\citep{sun2016multirate} and so on. Among them, SMC has attracted great attention for its simple implementation and robust performance in the presence of uncertainties and external disturbances~\citep{fukushima2014sliding,kuang2019precise}.
Beyond the basic SMC structure~\citep{edwards1998sliding}, many advanced SMC strategies such as the modified reaching laws~\citep{yu2005continuous}, boundary layer technique~\citep{chen2002state}, and super-twisting algorithm (STA)~\citep{moreno2012strict} have also been proposed. The STA, focusing on improving the dynamics of the sliding variables, has been considered as one of the most effective approaches for the well-known chattering phenomenon~\citep{sun2018practical}. It is also robust with respect to bounded uncertainties and disturbances~\citep{shtessel2012novel,kuang2018contouring}, and has been implemented successfully in practice~\citep{shtessel2012novel,sadeghi2018super}.

To further improve the performance of SMC, fractional-order calculus is introduced to improve the state dynamics in the sliding surface and is combined with the STA~\citep{kuang2018high,wang2019adaptive}. Although there are some theoretical research and successful precedents for the application of fractional-order super-twisting algorithm (FOSTA)~\citep{wang2019adaptive,caponetto2015identification}, the parametric uncertainties are not always taken into consideration, or their bounds are assumed to be small. When the amplitudes of the uncertainties or disturbances are rather large, the sliding variable in STA cannot converge to the predefined sliding surface. Instead, it only converges to an uncertainty region around the sliding surface, which inevitably brings positioning error to the system~\citep{kuang2018simplified}. Previous researches have tried to reduce the range of the uncertainty region to improve the precision~\citep{sun2018practical}, but the negative influence from large uncertainties remains. 
 


Aiming to improve the control performance (i.e., high precision and robust performance) in the presence of large model uncertainties and disturbances, a novel adaptive neural network and fractional-order super-twisting algorithm (ANN-FSA) is proposed in this paper. Firstly, we use the radial basis function (RBF) neural network to approximate the uncertainties and disturbances in the system, and a corresponding fractional-order super-twisting controller is designed to compensate for uncertainties and disturbances. The stability of the proposed control strategy is also analyzed. Moreover, to guarantee the global convergence of the closed-loop system, an adaptive law is designed. At last, we apply the proposed controller to a wafer stage testbed. Experimental results show that the controller performs well and is robust against disturbances.

The remainder of this paper is organized as follows: Section~\ref{sec: system_model} provides the model of the wafer stage. Section~\ref{sec: controller design} presents the proposed controller, and the stability analysis of the controller. Section~\ref{sec:experiments} displays the experimental setup and the experimental results with the proposed controller. Finally, Section~\ref{sec:conclusion} presents conclusions.

\section{Model of Wafer Stage}\label{sec: system_model}

\begin{figure}[http]
  \centering
   \includegraphics[width=200pt]{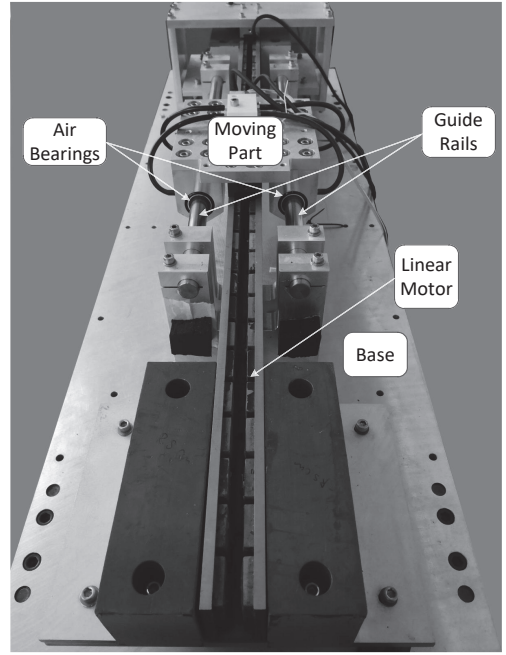}
   \caption{The experimental wafer stage testbed.}
   \label{fig:wafer_stage}
   \end{figure}
   
As Fig.~\ref{fig:wafer_stage} shows, the moving part of the wafer stage is driven by a permanent magnet linear synchronous motor (PMLSM).   
From~\cite{chen2019precision}, the mathematical model of a PMLSM is given as
\begin{align}
    F_e=M\ddot{p}+F_f+F_d,
\end{align}
where $F_e$ is the electromagnetic force generated by the linear motor, $M$ is the mass of the moving part, $p$ denotes the position of the moving part, $\ddot{p}$ is the second derivative of the position, i.e., the acceleration of the moving part, $F_f$ stands for the friction applied to the moving part, and $F_d$ is the external disturbance. The friction force $F_f$ is modelled as
\begin{align}
    F_f=F_c \textrm{sgn}(\dot{p})+K_v \dot{p},
\end{align}
where $F_c$ denotes the Coulomb friction force and $K_v$ is the viscous friction coefficient. As there is no direct contact between the moving stage and the guides, $F_c$ is very small and it is neglected in the following analysis in this paper. In modern linear motors, we can approximate the electromagnetic force $F_e$ to be proportional to the driving current with an additional nonlinear effect. Thus, $F_e$ is written as
\begin{align}
    F_e=K_e i-F_n(i),
\end{align}
where $K_e$ is electromagnetic coefficient and it is only related to the physical parameters of the PMLSM, $F_n(i)$ denotes the nonlinear thrust force including the force ripple.

From the analysis above, the mathematical model model becomes
\begin{align} \label{model:natural}
    \ddot{p}=& \frac{K_e}{M}i-\frac{K_v}{M}\dot{p}-\frac{F_d+F_n(i)}{M} \nonumber
    \\
    \triangleq&A\dot{p} +B u +d,
\end{align}
where $A=- \frac{K_v}{M}$, $B=\frac{K_e}{M}$, $d=-\frac{F_d+F_n(i)}{M}$ is the generalized disturbance, and $u=i$ is the control input.

In practical applications,  there are always errors between the identified results and the actual values, i.e., the parametric uncertainties. Here, theses uncertainties are assumed to be bounded:
\begin{align}
    A&=\bar{A}(1+\delta_{A}),\\
    B&=\bar{B}(1+\delta_B),
\end{align}
where $\bar{\cdot}$ stands for the identified nominal parameter, and $\delta_{\cdot}$ is an unknown constant.

The model~(\ref{model:natural}) is rewritten as
\begin{align} \label{model:final}
    \ddot{p}=\bar{A} \dot{p}+ \bar{B} u +\bar{A} \delta_A  \dot{p} + \bar{B}\delta_B u+d.
\end{align}

\section{Controller Design} \label{sec: controller design}

\subsection{The Proposed Controller}

For the convenience of further discussion, the definition of the fractional-order calculus is briefly reviewed.

\begin{mydef} (see \cite{podlubny1998fractional}) \label{def: fractional-order calculus}
  The definition of the $\xi$th order derivative for function $f(t)$ in Riemann-Liouville form is defined as
  \begin{equation}\begin{aligned}
\hspace*{-2em}D^{\xi}f(t)=\frac{1}{\gamma (m-\xi)}\frac{d^m}{dt^m}\int^t_{t_0}\frac{f(\tau)}{(t-\tau)^{\xi-m+1}}d\tau,
  \end{aligned}\end{equation}
  \begin{equation}\begin{aligned}
  \hspace*{-2em}  D^{-\xi}_{t}f(t)={}_{t_0}I^{\xi}_{t}f(t)=\frac{1}{\gamma (\xi)}\int^t_{t_0}\frac{f(\tau)}{(t-\tau)^{1-\xi}}d\tau,
  \end{aligned}\end{equation}
  where $\xi \in \mathbb{R}^+$ and $m-1<\xi<m$, $m\in \mathbb{N}$, and $\gamma (\bullet)$ is the Gamma function that $\gamma(\xi)=\int_0^{+\infty}t^{\xi-1}e^{-t}dt$. 
\end{mydef}


Fig.~\ref{fig:control structure} shows the proposed controller in this paper. It consists of three parts: a nominal-model-based equivalent controller, a super-twisting controller to reduce chattering, and a neural network controller to compensate for the uncertainties and disturbances. As shown in Fig.~\ref{fig:control structure}, the control law is given by:
\begin{align} \label{eq:overall controller}
u=u_{eq}+u_{nn}+u_{st},
\end{align}
where $u_{eq}$, $u_{st}$ and $u_{nn}$ are the control inputs from the equivalent controller, the super-twisting controller and the neural network controller, respectively. 
\begin{figure}[http]
	\centering
	\includegraphics[width=250pt]{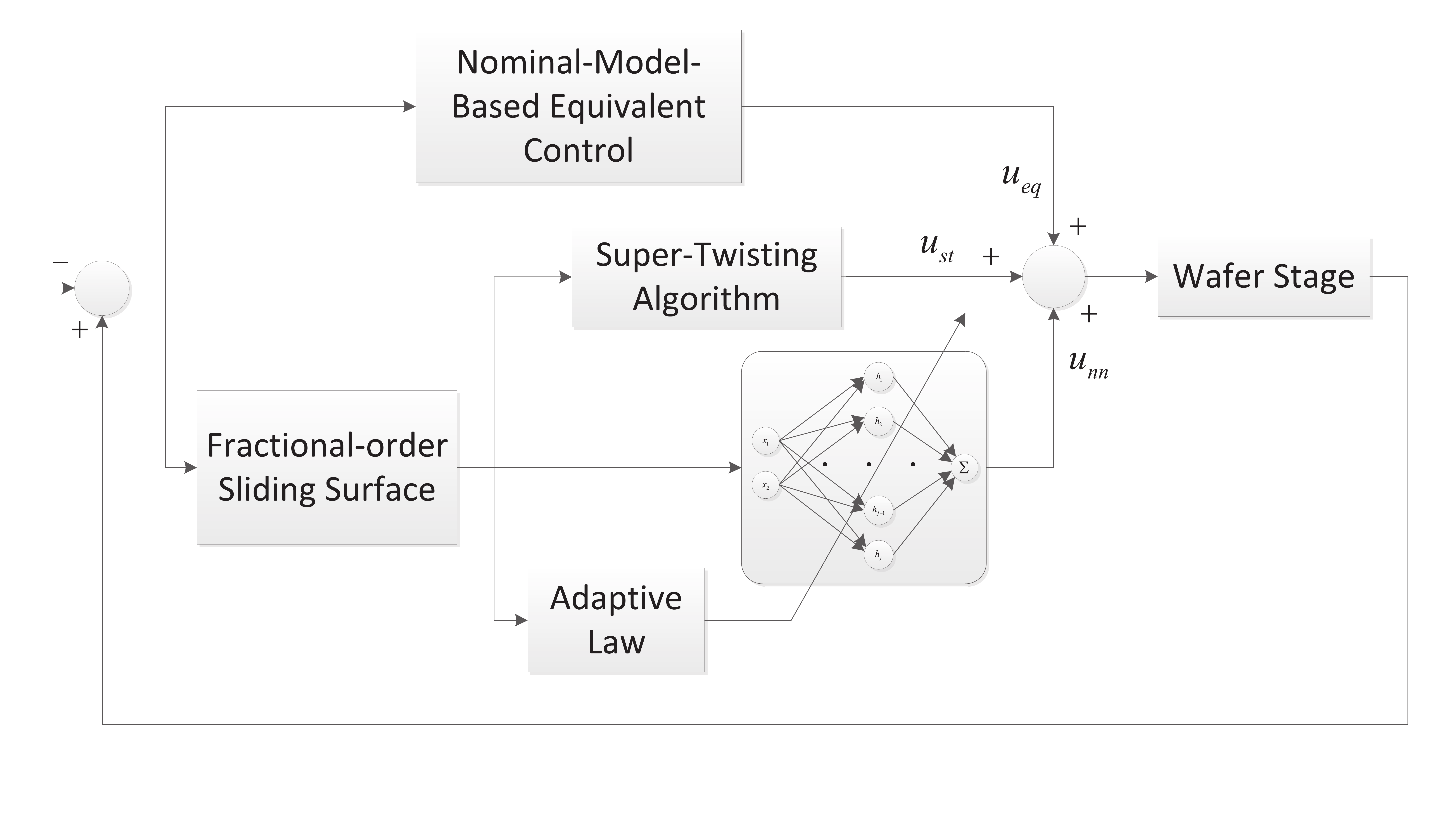}
	\caption{The proposed control structure of the system.}
	\label{fig:control structure}
\end{figure}

The fractional-order sliding surface is designed as
\begin{align} \label{definition of sliding surface}
    z=\dot{e}+\alpha_1 D^{\eta-1}(\textrm{sig}^a(e))+\alpha_2 e,
\end{align}
where $e=p-r$ is the tracking error, $r$ is the reference position, $\dot{e}$ is the derivative of $e$, $\alpha_1$ and $\alpha_2$ are two selected positive constants. $\textrm{sig}^a(e)\triangleq\textrm{sgn}(e)|e|^a$, and $\eta\in(0,1)$ is a pre-defined constant. Therefore, the control law for the equivalent controller can be obtained by setting $\dot{z}=0$, which yields:
\begin{align}
    u_{eq}=
    \frac{1}{\bar{B}}(\ddot{r}-\alpha_1D(D^{\eta-1}(\textrm{sig}^a(e)))-\alpha_2\dot{e}-\bar{A}v).
\end{align}

The control law of the super-twisting controller is given by
\begin{align}
        u_{st}=-\frac{h_1}{\bar{B}}\Phi_1(z)-\int_0^t\frac{h_2}{\bar{B}}\Phi_2(z) dt,
\end{align}
where $h_1$ and $h_2$ are tunable constants, and
\begin{align}
    \Phi_1(z)&=|z|^{\frac{1}{2}} \textrm{sgn}(z), \\
    \Phi_2(z)&=\frac{1}{2} \textrm{sgn}(z).
\end{align}

Applying the controller~(\ref{eq:overall controller}) to the wafer stage system~(\ref{model:final}), we obtain
\begin{align}
    \dot{z}= -h_1\Phi_1(z)-\int_0^t h_2 \Phi_2(z) dt+\bar{B}u_{nn}+f,
\end{align}
where $f=\frac{\delta_B}{1+\delta_B}(\ddot{r}-\alpha_1D(D^{\eta-1}(\textrm{sig}^a(e)))-\alpha_2\dot{e}-Av-d)-\frac{1}{\bar{B}}(\bar{A}\delta_A+d)$.

In order to reduce the influence of the uncertainties on the dynamics of sliding variable $z$, $u_{nn}$ is designed as 
\begin{align} \label{eq: initial nn controller}
    u_{nn}=-\frac{f}{\bar{B}}.
\end{align}

However, as unknown parameters exist in $f$, $u_{nn}$ cannot be directly obtained. Hence, in this paper, we propose to use an RBF neural network to approximate the value of $f$.

The RBF neural network contains three layers: an input layer, a hidden layer, and an output layer. The input layer is defined as 
\begin{align}
    \bm{x}=[x_1~x_2]^{\bm{T}}=[z~\dot{z}]^{\bm{T}}.
\end{align}
The hidden layer consists of Gaussian functions, and it is presented as
\begin{align}
\bm{h}(\bm{x})&=[h_1, h_2, \cdots h_{J-1}, h_{J}]^{\bm{T}}, \\
    h_j(\bm{x})=&\textrm{exp} \left(-{\frac{||\bm{x}-\bm{c_j}||^2}{2b_j^2}}\right),j=1,2,...,J.
\end{align}
where $\bm{c_j}=[c_{1j}, c_{2j}]^{\bm{T}}$ and $b_j$ are, respectively, the mean vector and the standard deviation of the $j$-th Gaussian basis. $||\bullet||$ denotes the Euclid norm of the vector $\bullet$, and $J$ is the number of Gaussian basis.

The output of the neural network is 
\begin{align}
    f=\bm{W}^{*T}\bm{h}(\bm{x})+{\varepsilon},
\end{align}
where $\bm{W}^{*}=(W_1^*,W_2^*,...,W_J^*)$ represents the optimal weight vector of the neural network, and $\varepsilon$ stands for the approximation error.


Traditional methods such as the gradient descent method to obtain the parameters of the RBF neural network cannot guarantee the global convergence of the closed-loop system~\citep{cuong2018adaptive}. To guarantee the effectiveness of the approximation and the global convergence of the closed-loop system, the mean vector $\bm{c_j}$ is designed to be in the effective mapping of the Gaussian function, $b_j$ is selected with a proper value~\citep{jinkun2013radial}, and the weight $\bm{W}$ is updated online by the adaptive law
\begin{align}\label{eq:adaptive law}
  \bm{ \dot{\hat{W}}}=(1+\rho)\Phi_2(z)\bm{h}(x)-\frac{d \Phi_1(z)}{dz} \omega,
\end{align}
where $\rho$ is a positive tunable parameter, $\bm{\hat{W}}$ is the approximation of $\bm{W}$, and $\omega=-h_2 \int_0^t \Phi_2(z)$.

Consequently, the approximation of $f$ is obtained as
\begin{align}
    \hat{f}=\bm{\hat{W}}^{\bm{T}} \bm{h}(x).
\end{align}

The neural-network-based controller~(\ref{eq: initial nn controller}) is rewritten as
\begin{align}
    u_{nn}=-\frac{\hat{f}}{\bar{B}}.
\end{align}

\subsection{Stability Analysis} \label{subsec: Stability Analysis}
\begin{mythm} \label{thm: stability of s}
For the system in (\ref{model:final}), if the controller is designed as~(\ref{eq:overall controller}), the parameters $h_1$ and $h_2$ are selected as
\begin{align} \label{ineq:h_1}
h_1 &\geq \max \left[ \rho^2+\rho+\frac{1}{2}, \frac{\rho^2+3 \rho +1}{2}\right],\\ \label{eq:h_2}
h_2  &= \rho^2+\rho(1+h_1),
\end{align}
and the adaptive law is designed as~(\ref{eq:adaptive law})
, then $\Phi_1(z)$ converges to the region
\begin{align}
    |\Phi_1(z)| \leq \frac{|\varepsilon|}{\min \left[\sqrt{2h_1-\rho}-1-\rho, \rho+\frac{2h_1-1}{2(\rho+1)} \right]}.
\end{align}
\end{mythm}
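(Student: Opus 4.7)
The plan is to combine Moreno's quadratic Lyapunov function for the super-twisting sliding-variable dynamics with a quadratic penalty on the RBF weight estimation error, and then exploit the adaptive law (\ref{eq:adaptive law}) to cancel the weight-dependent cross terms, leaving a dissipation inequality that yields the advertised ultimate bound on $\Phi_1(z)$. Closing the loop by substituting $u=u_{eq}+u_{st}+u_{nn}$ with $u_{nn}=-\hat{\bm W}^{\bm T}\bm{h}(\bm x)/\bar B$, writing $\tilde{\bm W}=\bm W^{*}-\hat{\bm W}$, and letting $\omega=-h_2\int_0^t\Phi_2(z)\,dt$ as in the statement, the error dynamics take the perturbed super-twisting form
\begin{align*}
\dot z=-h_1\Phi_1(z)+\omega+\tilde{\bm W}^{\bm T}\bm h(\bm x)+\varepsilon,\qquad \dot\omega=-h_2\Phi_2(z),
\end{align*}
so that the NN residual $\tilde{\bm W}^{\bm T}\bm h(\bm x)+\varepsilon$ plays the role of a matched perturbation on an otherwise standard super-twisting pair.

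I would then take
\begin{align*}
V=\bm\zeta^{\bm T}P\bm\zeta+\tfrac12\tilde{\bm W}^{\bm T}\tilde{\bm W},\qquad \bm\zeta=[\Phi_1(z),\omega]^{\bm T},
\end{align*}
with $P$ a symmetric positive definite matrix of the Moreno type whose entries depend on $(h_1,h_2,\rho)$. Differentiating $V$ along trajectories and using $\dot\Phi_1(z)=(d\Phi_1/dz)\,\dot z$ together with the identity $\Phi_1(z)(d\Phi_1/dz)=\Phi_2(z)$, the cross terms in $\tilde{\bm W}^{\bm T}\bm h(\bm x)$ organize themselves as $[(1+\rho)\Phi_2(z)-(d\Phi_1/dz)\omega]\tilde{\bm W}^{\bm T}\bm h(\bm x)$, which is exactly the structure that the adaptive law (\ref{eq:adaptive law}) is engineered to cancel against $-\tilde{\bm W}^{\bm T}\dot{\hat{\bm W}}$.

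After cancellation, $\dot V$ reduces to $-\bm\zeta^{\bm T}Q(h_1,h_2,\rho)\bm\zeta$ (with one entry weighted by $d\Phi_1/dz=1/(2|z|^{1/2})$) plus a disturbance term linear in $\varepsilon\bm\zeta$. The bounds (\ref{ineq:h_1})-(\ref{eq:h_2}) should then be exactly what is required to make $Q$ positive definite and, more sharply, to give eigenvalue lower bounds $\sqrt{2h_1-\rho}-1-\rho$ and $\rho+(2h_1-1)/(2\rho+2)$. Completing the square in $\varepsilon$ and using the standard super-twisting inequality $\|\bm\zeta\|\geq|\Phi_1(z)|$ then yields $\dot V<0$ whenever $|\Phi_1(z)|$ exceeds the stated threshold, and the ultimate bound follows by the standard invariance argument.

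The main obstacle, in my view, is precisely this final algebraic step: because the super-twisting vector field carries a $1/|z|^{1/2}$ singularity that is only tamed after carefully grouping the $\Phi_1,\Phi_2$ identities, verifying simultaneously that $Q$ is positive definite and that its eigenvalue lower bounds match the two explicit expressions appearing under the min in the theorem is delicate. A secondary concern is to guarantee that $\tilde{\bm W}$ stays bounded, so that $V$ remains a legitimate Lyapunov function; this should follow from positivity of $V$ together with the dissipation inequality, but must be checked explicitly.
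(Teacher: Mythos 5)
Your plan follows the paper's own route step for step: the same perturbed super-twisting pair in $(\Phi_1(z),\omega)$, a Lyapunov function $V=\bm{\chi}^{\bm T}\bm{P}\bm{\chi}+(\mathrm{const})\,\bm{\tilde W}^{\bm T}\bm{\tilde W}$, cancellation of the $\bm{\tilde W}^{\bm T}\bm{h}(\bm x)$ cross terms through the adaptive law~(\ref{eq:adaptive law}), and negative definiteness of the residual quadratic form once $|\Phi_1(z)|$ exceeds the threshold. The problem is that, as written, the proposal stops short of the two points where the theorem is actually proved. First, the cancellation is not automatic: it ties the adaptive gains $(1+\rho)$ and $1$ in~(\ref{eq:adaptive law}) to a \emph{specific} pairing of $\bm P$ with the weight-error weighting. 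The paper takes $\bm P=\left[\begin{smallmatrix}\rho+\rho^{2} & -\rho\\ -\rho & 1\end{smallmatrix}\right]$ together with $V=\bm{\chi}^{\bm T}\bm{P}\bm{\chi}+\rho\,\bm{\tilde W}^{\bm T}\bm{\tilde W}$, so that the cross terms $2(\rho+\rho^{2})\Phi_2(z)\bm{\tilde W}^{\bm T}\bm h-2\rho\frac{d\Phi_1(z)}{dz}\omega\,\bm{\tilde W}^{\bm T}\bm h$ are exactly absorbed by $-2\rho\bm{\tilde W}^{\bm T}\dot{\bm{\hat W}}$. With your $\tfrac12\bm{\tilde W}^{\bm T}\bm{\tilde W}$ and an unspecified ``Moreno-type'' $\bm P$, the fixed law~(\ref{eq:adaptive law}) cancels nothing unless $\bm P$ is chosen as that matrix scaled by $1/(2\rho)$; you must exhibit $\bm P$ and verify this, because the adaptive law is given by the theorem and cannot be re-tuned to fit an arbitrary $\bm P$.

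Second, the step you yourself call the ``main obstacle'' is precisely the content of the paper's proof, and it is not done by eigenvalue lower bounds plus completing the square in $\varepsilon$. The paper folds $\varepsilon$ into the quadratic form itself: after cancellation and substitution of~(\ref{ineq:h_1})--(\ref{eq:h_2}), $\dot V=\frac{d\Phi_1(z)}{dz}\bm{\chi}^{\bm T}\left[\begin{smallmatrix}A & B\\ B & -\rho\end{smallmatrix}\right]\bm{\chi}-\rho\frac{d\Phi_1(z)}{dz}\bm{\chi}^{\bm T}\bm{\chi}$ with $A=2\rho(1+\rho)\frac{\varepsilon}{\Phi_1(z)}+2\rho^{3}+2\rho^{2}+\rho-2h_1\rho$ and $B=-\rho\frac{\varepsilon}{\Phi_1(z)}$. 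The two expressions under the min then arise directly from a $2\times 2$ sign-and-determinant check: $|\varepsilon|/|\Phi_1(z)|<\frac{2h_1-1}{2(\rho+1)}-\rho$ forces $A<0$, and $|\varepsilon|/|\Phi_1(z)|<\sqrt{2h_1-\rho}-1-\rho$ is what makes the determinant positive, so the bracketed matrix is negative definite and $\dot V<0$ outside the claimed region. Without carrying out this computation (or an equivalent one) your outline does not yet produce the specific bound in the statement; and note that your closing concern about boundedness of $\bm{\tilde W}$ is not resolved by the $\dot V<0$ argument alone, since that inequality only holds outside the residual set.
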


\begin{proof}
Substituting the designed controller~(\ref{eq:overall controller}) and the definition of the sliding surface~(\ref{definition of sliding surface}) into the model of the wafer stage~(\ref{model:final}), we have
\begin{align}\label{dynamics of s in proof}
\dot{z}=&-h_1\Phi_1(z)+\omega+\tilde{f},  \\
\dot{\omega}=&-h_2 \Phi_2(z).
\end{align}
where $\tilde{f}=f-\hat{f}$.

Select a Lyapunov function candidate as
  \begin{align}
 V =&\bm{\chi^T}\bm{P}\bm{ \chi} + \rho \bm{\tilde{W}}^{\bm{T}}\bm{\tilde{W}}, \label{eq:V}
\end{align}
  where $\bm{\chi}=
    \left[
    \begin{array}{cc}
      \Phi_1(z) & \omega
    \end{array}
    \right]^{\bm{T}}$, $\bm{\tilde{W}}=\bm{W}^*-\bm{\hat{W}}$, $\bm{P}$ is defined as
\begin{align}
   \bm{P}=
  \left[
    \begin{array}{cc}
      \rho +\rho ^2 & -\rho  \\
      -\rho  & 1
    \end{array}
  \right].
\end{align}

The derivative of $V$ is
 \begin{align}
\dot{V}=\left(\frac{d \bm{\chi}}{dt}\right)^{\bm T} \bm{P \chi}+\bm{\chi^T P}\frac{d \bm{\chi}}{dt}+2\rho \bm{\tilde{W}}^{\bm{T}}\dot{\bm{\tilde{W}}},
    \label{eq:derivative of V}
 \end{align}
in which the derivative of $\bm{\chi}$ is 
\begin{align}
\frac{d \bm{\chi}}{dt}=&
\left[
\begin{array}{cc}
  \frac{\Phi_1(z)}{dz}\dot{z} & \dot{\omega}
\end{array}
\right]^{\bm{T}} 
\nonumber \\
=&
\frac{\Phi_1(z)}{dz}
\left[
\begin{array}{cc}
-h_1 & 1\\
-h_2 & 0
\end{array}
\right]\left[
  \begin{array}{cc}
    \Phi_1(z) &
    \omega
  \end{array}
\right]^{\bm{T}}+
\left[
\begin{array}{cc}
   \frac{\Phi_1(z)}{dz}\tilde{f}  &  0
\end{array}
\right]^{\bm{T}}
\nonumber \\
=&
\frac{\Phi_1(z)}{dz}
\left[
\begin{array}{cc}
-h_1+\frac{\varepsilon}{\Phi_1(z)} & 1\\
-h_2 & 0
\end{array}
\right]\left[
  \begin{array}{cc}
    \Phi_1(z) &
    \omega
  \end{array}
\right]^{\bm{T}}
\nonumber \\
~&+
\left[
\begin{array}{cc}
   \frac{\Phi_1(z)}{dz}\bm{\tilde{W}}\bm{h}(x)  &  0
\end{array}
\right]^{\bm{T}}
. \label{eq:derative of chi}
\end{align} 
Substituting~(\ref{eq:derative of chi}) into~(\ref{eq:derivative of V}), we have
\begin{align} 
\dot{V}=&\frac{\Phi_1(z)}{dz} 
\bm{\chi}^{\bm{T}}
\left[
\begin{array}{cc}
     A & B \\
     B & -\rho 
\end{array}
\right]{\bm{\chi}}
-\frac{\Phi_1(z)}{dz}\bm{\chi}^{\bm{T}}
\left[
\begin{array}{cc}
     \rho&0  \\
     0& \rho
\end{array}
\right] 
\bm{\chi} \nonumber \\
~&+2\left((\rho+\rho^2)\Phi_2(z)\bm{\tilde{W}}^{\bm{T}}\bm{h}(x)\right.
\nonumber \\
~&\left.-\rho\frac{d \Phi_1(z)}{dz}\omega \bm{\tilde{W}}^{\bm{T}}\bm{h}(x)\right)
-2\rho\bm{\tilde{W}}^{\bm{T}}\dot{\bm{\hat{W}}} \nonumber \\
\label{eq:proof, dot V}
=& \frac{\Phi_1(z)}{dz} 
\bm{\chi}^{\bm{T}}
\left[
\begin{array}{cc}
     A & B \\
     B & -\rho 
\end{array}
\right]{\bm{\chi}}^{\bm{T}} 
-\frac{\Phi_1(z)}{dz}\bm{\chi}^{\bm{T}}
\left[
\begin{array}{cc}
     \rho&0  \\
     0& \rho
\end{array}
\right] 
\bm{\chi},
\end{align}
where $A=2(\rho+\rho^2)(-h_1+\frac{\varepsilon}{\Phi_1(z)})+2\rho h_2+\rho$, $B=\rho +\rho^2 -h_2 +h_1 \rho -\frac{\varepsilon}{\Phi_1(z)}\rho$. Further, taking~(\ref{ineq:h_1}) and~(\ref{eq:h_2}) into consideration, we have $A=2\rho (1+\rho)\frac{\varepsilon}{\Phi_1(z)}+2\rho^3+2\rho^2+\rho-2h_1\rho$, and $B=-\frac{\varepsilon}{\Phi_1(z)}\rho$.

When $|\Phi_1(z)|>\frac{|\varepsilon|}{\min \left[\sqrt{2h_1-\rho }-1-\rho, -\rho+\frac{2h_1-1}{2(\rho+1)} \right]}$, two inequations can be obtained
\begin{align} \label{ineq:1}
    \frac{|\varepsilon|}{|\Phi_1(z)|}&<-\rho+\frac{2h_1-1}{2(\rho+1)}, \\
    \frac{|\varepsilon|}{|\Phi_1(z)|}&<\sqrt{2h_1-\rho}-1-\rho.
\end{align}
Based on the definition of $A$ and (\ref{ineq:1}), we have $A < 0$.

For the condition that $1+\rho-\sqrt{2h_1-\rho}<\frac{\varepsilon}{\Phi_1(z)}<\sqrt{2h_1-\rho}-1-\rho$,
\begin{align}
    \left|
\begin{array}{cc}
    A & B \\
    B & -\rho
\end{array}
    \right|=&-\rho^2\frac{\varepsilon^2}{\Phi_1^2(z)}-2\rho^2(1+\rho)\frac{\varepsilon}{\Phi_1(z)}
    \nonumber\\
    ~&-2\rho^2(\rho^2+\rho+\frac{1}{2}-h_1) 
    > 0.
\end{align}

Then we can claim that the matrix $\left[\begin{array}{cc}
    A & B \\
    B & -\rho
\end{array}\right]$ is negative definite. Therefore,
from~(\ref{eq:proof, dot V}), we have
\begin{align}
\dot{V}
< &
-\frac{\Phi_1(z)}{dz}\bm{\chi}^{\bm{T}}
\left[
\begin{array}{cc}
     \rho&0  \\
     0& \rho
\end{array}
\right]
\bm{\chi}^{\bm{T}}
< 0.
\end{align}

Here completes the proof of Theorem~\ref{thm: stability of s}.
\end{proof}

\section{Experiments} \label{sec:experiments}
\subsection{Experimental Setup} \label{subsec:experimental setup}

\begin{figure}[http]
  \centering
   \includegraphics[width=250pt]{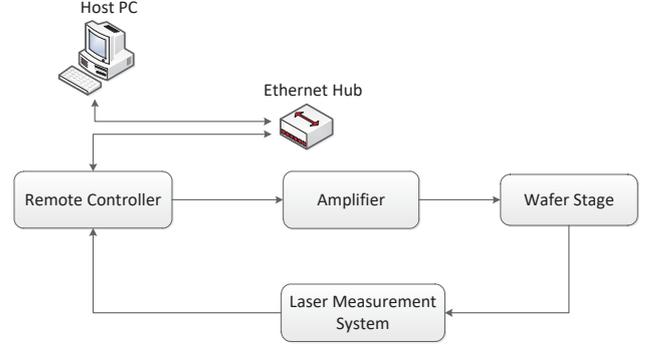}
   \caption{The block diagram of the experimental system.}
   \label{fig:structure_closed_loop}
   \end{figure}

The overall structure of our experimental system is depicted in Fig.~\ref{fig:structure_closed_loop}. The control algorithm is programmed in the LabView environment on the host computer. The host computer is connected with the remote controller (PXI 7831, from National Instruments) via Ethernet, so that the control algorithm can be deployed in the LabView Real-Time system in the remote controller. The output of the controller is amplified by an amplifier~(TA330, from Trust Automation) and applied to the wafer stage testbed. The position of the moving part of the wafer stage is measured by a laser ranging system~(from Keysight), and the measuring results are fed back to the remote controller. The nominal parameters of the wafer stage have been identified as $\bar{A}=-1.092~s^{-1}$ and $\bar{B}=3.9124~m/(s^2\cdot A)$.

\subsection{Experimental Results} \label{subsec:experimental results}
We implement the traditional PID controller, the SMC, the advanced FOSTA, and the proposed ANN-FSA to the wafer stage testbed to investigate the effectiveness of the proposed controller. The reference trajectory is shown in Fig.~\ref{fig:reference}. The scan length is set as $0.04$ m, and the scan velocity is set as $0.032$~m/s. The parameters in each controller are tuned so that the best performance of each controller is achieved. The sampling interval of the experiments are set as 1 ms. For the RBF neural network, the number of the hidden nodes is set as $5$, other parameters are selected as $c_1=[-3~-1~0~1~3]$, $c_2=[-7~-3~0~3~7]$, $b=[50~50~50~50~50]$, $\rho=0.2$ and the initial value of $\bm{W}$ is set as $\bm{0}$. In the fractional-order super-twisting algorithm, $\eta$ and $a$ are selected as $\frac{1}{2}$, and other parameters are tuned as $h_1=500$, $h_2=30$, $\alpha_1=0.001$ and $\alpha_2=175$.
   
\begin{figure}[http]
  \centering
  \includegraphics[width=250pt]{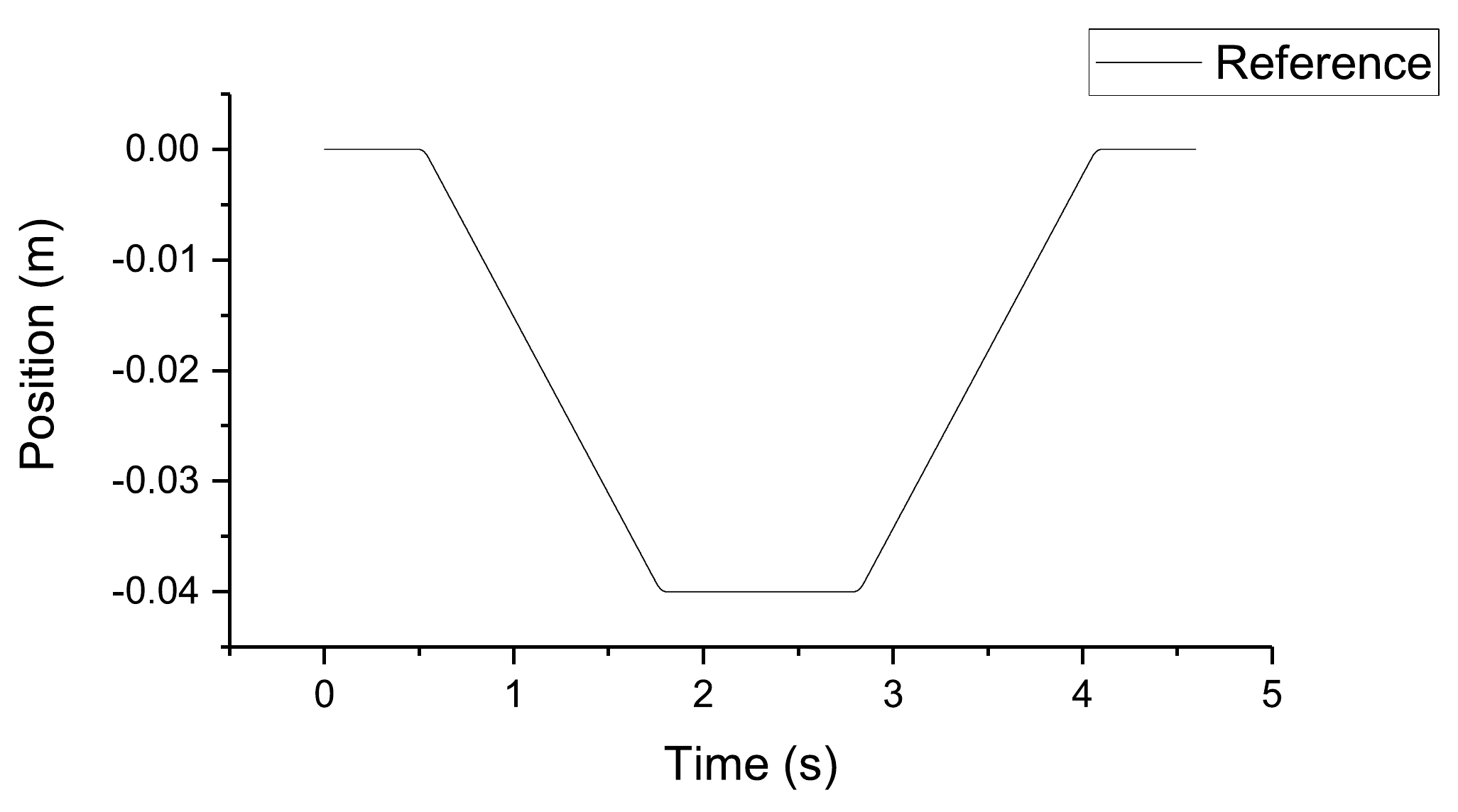}
  \caption{The reference signal in the experiments.}
  \label{fig:reference}
  \end{figure}
   
\begin{figure}[http]
  \centering
  \includegraphics[width=250pt]{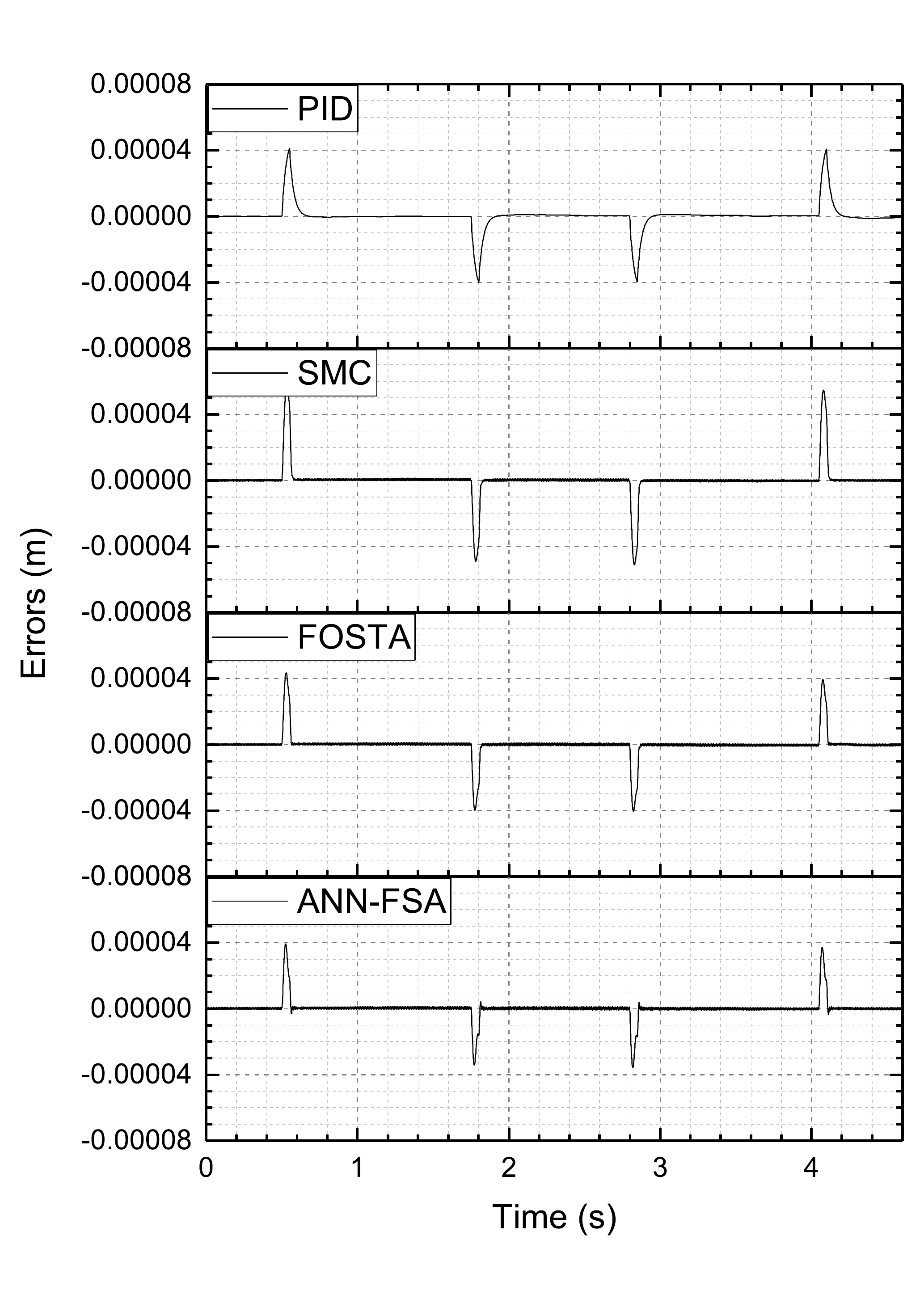}
  \caption{The tracking errors of the controllers in Case 1: without extra disturbance.}
  \label{fig:errors in case 1}
  \end{figure}

\begin{figure}[http]
  \centering
  \includegraphics[width=250pt]{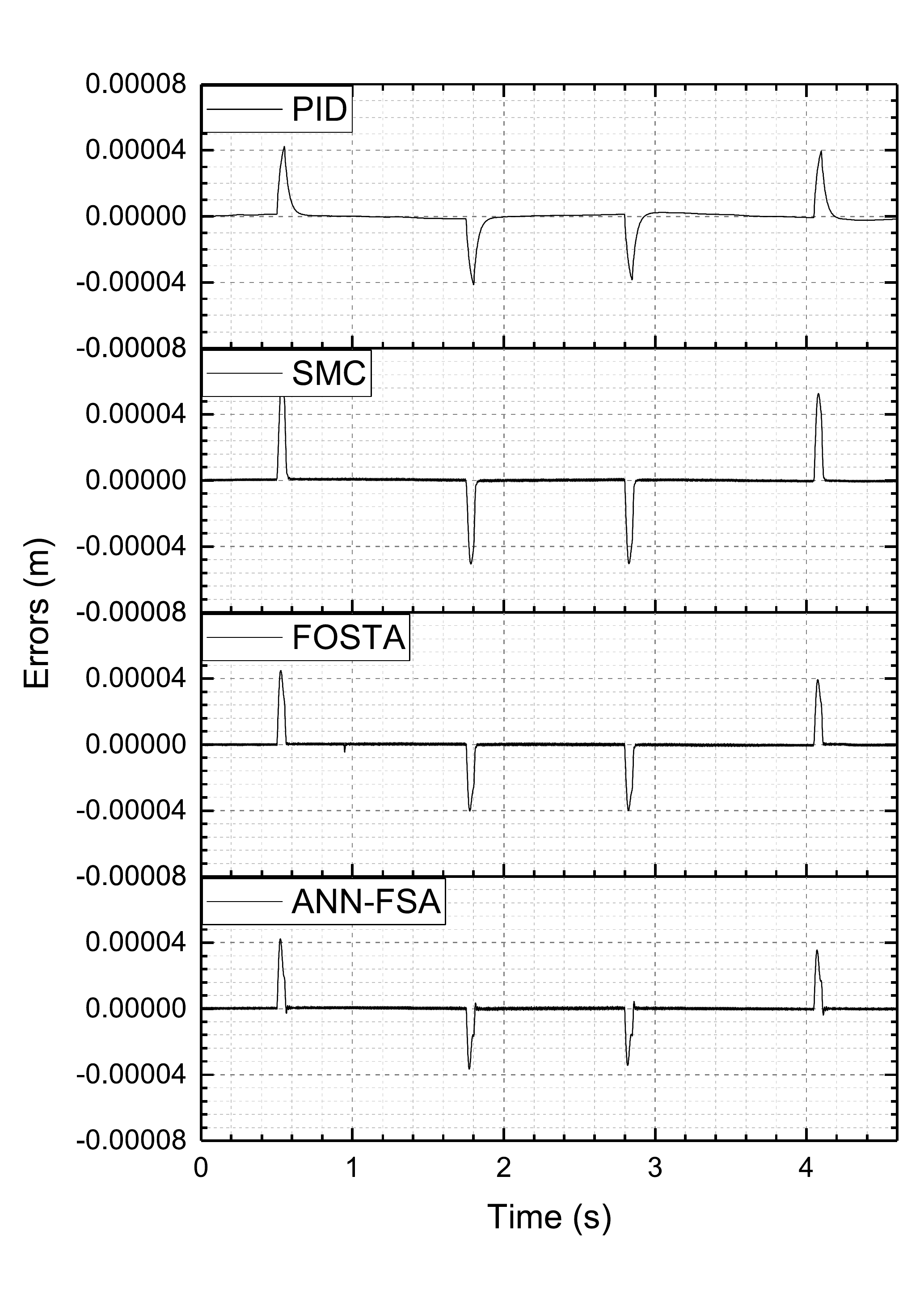}
  \caption{The tracking errors of the controllers in Case 2: with extra disturbance.}
  \label{fig:errors in case 2}
  \end{figure}
   
   Moreover, to study the robustness of these controllers, with all the parameters maintained the same, an extra external sinusoidal disturbance is generated and applied to the system. The amplitude and frequency of the disturbance signal are set as $0.03$~m (rather large compared with referce signal) and $1$~Hz, respectively. We denote the situation without extra disturbance as Case~1 and the situation with the additional disturbance as Case~2. The tracking performance in these two cases is shown in Fig.~\ref{fig:errors in case 1} and Fig.~\ref{fig:errors in case 2}, respectively.
   
   \begin{table}[tp]
   \centering
   \caption{RMS errors and the difference values between the two cases (Units: $\mu m$).}
   \label{table:RMS Error}
   \begin{tabular}{cccc}
   \cline{1-4}
    & Case 1 & Case 2 & Difference Values\\
    \cline{1-4}
    PID & 7.43 & 7.51 & 0.08 \\
    SMC & 9.12 & 9.15 & 0.03 \\
    FOSTA & 6.71 & 6.76 & 0.05 \\
    ANN-FSA & 4.95 & 4.94 & -0.01 \\
    \cline{1-4}
    \end{tabular}
    \end{table}
    \linespread{1}
   
   In Fig.~\ref{fig:errors in case 1}, we note that all the four controllers have large tracking errors when the scanning velocity changes. The peak error of SMC is the largest among the four controllers, which is at around $50$~$\mu$m. Tracking error via the proposed ANN-FSA is the smallest, about $35$~$\mu$m. We also note that the errors of SMC and ANN-FSA decay faster than the PID controller, but that the error via the PID controller is smoother than the other two controllers when the tracking errors are small, i.e., closer to zero. Figure \ref{fig:errors in case 2} shows that even in the presence of disturbances, the ANN-FSA can achieve the smallest tracking error. Moreover, from Fig.~\ref{fig:errors in case 1} and Fig.~\ref{fig:errors in case 2}, we note that the tracking error increases when disturbances exist. This means that the PID controller is not as robust enough as SMC and ANN-FSA. To quantitatively describe the robustness, we calculate the root mean square (RMS) errors of each controller in both cases and the results are displayed in Table.~\ref{table:RMS Error}. We can see that compared with traditional SMC, the precision of FOSTA is significantly improved. This is due to the introduction of the fractional-order sliding surface and the super-twisting algorithm. Further, the proposed ANN-FSA has the smallest RMS error in both Case~1 and Case~2. According to the difference values between the RMS errors in Case~1 and Case~2, we can conclude that SMC and FOSTA strategy is more robust than the PID controller. ANN-FSA remains precise in the presence of the large external disturbances. The results and comparison prove that the proposed control scheme achieves the best performance in terms of precision and robustness.

\section{Conclusion}  \label{sec:conclusion}
In this paper, an adaptive neural-network and fractional-order super-twisting algorithm was proposed and applied to a precision motion system. In this way, not only the dynamics of the states on the sliding surface was improved via the super-twisting algorithm, but also unknown model uncertainties and disturbances of the system were well compensated. 
Moreover, an adaptive law was derived for the neural-network-based controller so that the closed-loop system is globally convergent. Both stability analysis and experimental verification were provided. The comparison results among a PID controller, a conventional SMC, an advanced FOSTA and the proposed ANN-FSA showed that the proposed controller could achieve higher precision and better robustness than conventional controllers.



\bibliography{ifacconf}             
                                                   







\end{document}